\newcommand{\B}[1]{\textbf{#1}}
\begin{document}

\title{Inverse Intersections for Boolean Satisfiability Problems}
\author{Paul W. Homer}
\date{January 5th, 2025}
\maketitle

\begin{abstract}
Boolean Satisfiability (SAT) problems are expressed as mathematical formulas. This paper presents a matrix representation for these SAT problems.

It shows how to use this matrix representation to get the full set of valid satisfying variable assignments. It proves that this is the set of answers for the given problem and is exponential in size relative to the matrix.

It presents a simple algorithm that utilizes the inverse of each clause to find an intersection for the matrix. This gives a satisfying variable assignment. 

\end{abstract}

\section{Background}
%%%%%%%%%%%%%%%%%%%%%%%%%%%%%%%%%%%%%%%%%%%%%%%%%%%

Boolean Satisfiability (SAT) is a decision problem, which can be phrased as "is there at least one assignment to a set of variables that satisfies a given boolean formula?" 

SAT problems are represented as a formula containing
\emph{variables} such as $x_1, \ldots, x_n$ and the \emph{operators} $AND$ ($\land$), $OR$ ($\lor$), and $NOT$ ($\neg$). Parentheses are used for separate parts of the formula. 

For this paper we will consider Boolean formulas that are in conjunctive normal form (CNF). This will cover any k-SAT problems.

For example, a problem with 4 variables can be expressed as:

\begin{equation*}
	R = 
	(\neg x_1 \lor x_3 \lor x_4 ) 
	\land (x_2)
	\land (x_1 \lor \neg x_2 \lor \neg x_3 \lor x_4)
	\land (\neg x_2 \lor \neg x_4)
\end{equation*}

\bigskip

A \emph{literal} is either the variable $x_i$ or it's negation $\neg x_i$. A $TRUE$ value will satisfy $x_i$ while a $FALSE$ value will satisfy $\neg x_i$. 

A \emph{clause} in a k-SAT problem is defined as being a collected
set of literals combined with $OR$ operators contained within parenthesis. The clauses are  combined together by $AND$ operators. In the above example there are 4 clauses.

If there is at least one variable assignment for which at least one variable in every clause is satisfied, then there is at least one answer to the problem. If there are no such assignments for any clause, then the problem is \emph{unsatisfiable}. 

SAT problems are the first known examples of NP-complete problems, and
have been proven to be polynomial reducible \cite{Cook} to 3-SAT problems. 

\newpage

%%%%%%%%%%%%%%%%%%%%%%%%%%%%%%%%%%%%%%%%%%%%%%%%%%%
\section{Problem Representations}
%%%%%%%%%%%%%%%%%%%%%%%%%%%%%%%%%%%%%%%%%%%%%%%%%%%

For the k-SAT problem:

\begin{equation*}
	R = 
	(\neg x_1 \lor x_3 \lor x_4 ) 
	\land (x_2)
	\land (x_1 \lor \neg x_2 \lor \neg x_3 \lor x_4)
	\land (\neg x_2 \lor \neg x_4)
\end{equation*}

There are $V$ variables in a k-SAT problem, for the above example $V = 4$: $x_1$, $x_2$, $x_3$, and $x_4$. They are joined by $OR$ ($\lor$).

There are $C$ clauses, in the above example there are $C = 4$ clauses joined by $AND$ ($\land$).

\subsection{Matrix Representation}

We can represent any such problem $R$ as a matrix $R_M$ of size $V*C$. This works for general problems, but also for any k-SAT problems, such as 3-SAT. 

Each variable in the problem $R$ is a row in the matrix $R_M$.  Each clause in the problem $R$ is a column in the matrix $R_M$.

For each cell in the matrix, we will use $T$ for the literal if it is $x_i$ in the formula. We will use $F$ for it's complement $\neg x_i$. 

We will use $U$ for any variable in $R_M$ that is unassigned in the current clause.

This gives us a matrix representation of $R$ as:

\begin{equation}
	R_M = 
	\begin{bmatrix}
		\B{F}  & U       & \B{T}   & U \\
		U        & \B{T} & \B{F}   & \B{F} \\
		\B{T}   & U       & \B{F}   & U \\
		\B{T}  & U       & \B{T}   & \B{F} \\
	\end{bmatrix}
\end{equation}

\bigskip

The variables can be in any order and still represent the same problem. The clauses can also be in any order. 

For any algorithm that takes a k-SAT problem as input in this format the input size $N = V*C$.

We can address individual clauses in the matrix, such as the first one in $R_M$ above:

\begin{equation*}
	C_1 = 
	\begin{bmatrix}
		\B{F}  \\
	    U  \\
		\B{T}  \\
		\B{T}  \\
	\end{bmatrix}
\end{equation*}

\subsection{Binary Tree}

We can consider a binary tree created from the variables in their order in the matrix, where each node is labelled $T$ or $F$. 

\bigskip

\tikzset {
	treenode/.style={align=center,inner sep=0pt},
    simple/.style = {treenode,rectangle,text width=0.5cm,text height=0.5cm},
	stock/.style = {treenode,circle,draw=black,text width=0.8cm},
    strong/.style = {treenode,circle,draw=black,text width=0.8cm}
}

\begin{center}
\begin{tikzpicture}[->,
	level 1/.style={sibling distance=52mm, level distance=5mm},
	level 2/.style={sibling distance=28mm, level distance=5mm},
	level 3/.style={sibling distance=14mm, level distance=5mm},
	level 4/.style={sibling distance=7mm, level distance=5mm}
]
\node[treenode] {}
	child { node[treenode] {T}
		child { node[treenode] {T}
			child { node[treenode] {T}
				child { node[treenode] {T} }
				child { node[treenode] {F} }
			}
			child { node[treenode] {F}
				child { node[treenode] {T} }
				child { node[treenode] {F} }
			}
		}
		child { node[treenode] {F}
			child { node[treenode] {T}
				child { node[treenode] {T} }
				child { node[treenode] {F} }
			}
			child { node[treenode] {F}
				child { node[treenode] {T} }
				child { node[treenode] {F} }
			}
		}
	}
	child { node[treenode] {F}
		child { node[treenode] {T}
			child { node[treenode] {T}
				child { node[treenode] {T} }
				child { node[treenode] {F} }
			}
			child { node[treenode] {F}
				child { node[treenode] {T} }
				child { node[treenode] {F} }
			}
		}
		child { node[treenode] {F}
			child { node[treenode] {T}
				child { node[treenode] {T} }
				child { node[treenode] {F} }
			}
			child { node[treenode] {F}
				child { node[treenode] {T} }
				child { node[treenode] {F} }
			}
		}
	}
;
\end{tikzpicture}
\end{center}

\bigskip

The depth of this tree is $V+1$. For the above example, there are 4 variables. 

\subsection{Clause Paths}

We can express any clause in $R_M$ as a set of paths through the above binary tree. 

For any given clause $C_i$, a variable assigned to $T$ goes through the right side of the node. $F$ goes through the left side, while $U$ goes through both sides. Although we use $U$ for unassigned in any give clause, we treat it as \emph{union} for paths.

In this way a clause represents a set of paths through this tree. 

For the clause $C_1 = [ F, U, T, T ]$ above we have 2 paths in the tree:

\bigskip

\begin{center}
\begin{tikzpicture}[-,
	level 1/.style={sibling distance=56mm, level distance=13mm},
	level 2/.style={sibling distance=28mm, level distance=12mm},
	level 3/.style={sibling distance=14mm, level distance=11mm},
	level 4/.style={sibling distance=7mm, level distance=10mm}
]
\node[simple] {$C_1$}
	child { node[simple] {T}
		child { node[simple] {T}
			child { node[simple] {T}
				child { node[simple] {T} }
				child { node[simple] {F} }
			}
			child { node[simple] {F}
				child { node[simple] {T} }
				child { node[simple] {F} }
			}
		}
		child { node[simple] {F}
			child { node[simple] {T}
				child { node[simple] {T} }
				child { node[simple] {F} }
			}
			child { node[simple] {F}
				child { node[simple] {T} }
				child { node[simple] {F} }
			}
		}
	}
	child[->] { node[stock] {F}
		child { node[stock] {T}
			child { node[stock] {T}
				child { node[stock] {T} }
				child[-] { node[simple] {F} }
			}
			child[-] { node[simple] {F}
				child { node[simple] {T} }
				child { node[simple] {F} }
			}
		}
		child { node[stock] {F}
			child { node[stock] {T}
				child { node[stock] {T} }
				child[-] { node[simple] {F} }
			}
			child[-] { node[simple] {F}
				child { node[simple] {T} }
				child { node[simple] {F} }
			}
		}
	}
;
\end{tikzpicture}
\end{center}

\bigskip

So, the 9th and 13th leaf nodes are part of the paths included by the clause $C_1$.

A clause with $q$ cells set to $U$ includes $2^q$ paths. So, every clause $C_i$ has $1, 2, 4, ..., 2^{V-1}$ paths depending on the number of variables set to $U$. 

A clause with all $V$ variables set to $U$ is the empty clause which is \emph{unsatisfiable}, there are no arrangement of variables that could ever satisfy it. Any k-SAT problem containing such a clause is unsatisfiable.

\subsection{Membership Sets}

For each of the $2^V$ leaf nodes at the bottom of the tree for $R_M$,  we can use a $0$ or $1$ to indicate whether the path through the tree to that leaf node is from a clause $C_i$ that is included the matrix $R_M$. 

We can combine these booleans as an ordered list that matches the arrangement of the tree's leaf nodes. This can be described as a \emph{bit string}.

If we look at all leaf nodes,  in order,  for a given column, they form a bit string which represent the set of all paths in the clause

For the above tree for the clause $C_1 = [ F, U, T, T ]$ we get a string for the membership set of:

\begin{center}
$C_1$ = 0000 0000 1000 1000 
\end{center}

This shows the two paths through the tree that are created by $C_1$. The 9th and 13th bits are set. All others are 0.

We can do that for all clauses in the k-SAT problem $R_M$:

\begin{center}
$C_1$ = 0000 0000 1000 1000 \\
$C_2$ = 1111 0000 1111 0000 \\
$C_3$ = 0000 0010 0000 0000 \\
$C_3$ = 0000 0101 0000 0101 
\end{center}

Because all of these clauses are overlaid on the same binary tree for any given problem $R_M$, we can combine them together with $OR$ to get the full set of all included paths in $R_M$.

\begin{equation*}
\begin{split}
R_S &= C_1 \lor C_2 \lor C_3 \lor C_4 \\
	&= \text{ 0000 0000 1000 1000 } \\
    	&\lor \text{ 1111 0000 1111 0000 } \\ 
		&\lor \text{ 0000 0010 0000 0000 } \\
		&\lor \text{ 0000 0101 0000 0101 } \\ 
	&= \textbf{1111 0111 1111 1101 }
\end{split}
\end{equation*}

This shows us that all paths through the tree are included in $R_S$ except for the 5th and 15th one. These two paths are not covered by any of the clauses.

\subsection{Answer Sets}

If we reverse the order of the membership set $R_S$ and then flip the bits, this is the set of all possible answers $R_A$ to the problem $R$.

\begin{center}
$R_A$ = 0100 0000 0001 0000
\end{center}

There are two possible answers to $R$ located in the 2nd and 12th position. From the positions we can find the paths they represent, which gives us their variable assignments:

\begin{equation*}
R_A =  
\left\{ 
\begin{bmatrix}
T  \\
T  \\
T  \\
F \\
\end{bmatrix},
\begin{bmatrix}
F  \\
T  \\
F  \\
F \\
\end{bmatrix}
\right\}
\end{equation*}

We can validate that both of these answers are correct by seeing that for each answer there is at least one variable assignment that will satisfy each of the clauses in $R_M$, and thus $R$. There are no other valid answers to $R$.

\subsection{Validity}

To prove this relationship between the membership set and answer sets is correct for any matrix we will start by addressing an ambiguity in the matrix representation:

\newtheorem{initial}{Lemma}[section]

\begin{initial}
For nearly identical clauses $C_i$ and $C_j$, iff all other variables are identical, contradictory variables ($x_i$, $\neg x_i$) can cancel each other out and be set to unassigned $U$, without changing the answer set of the problem $R$.
\end{initial}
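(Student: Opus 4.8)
The plan is to recognize this as the classical \emph{resolution} (consensus) identity $(\alpha\lor x_k)\land(\alpha\lor\neg x_k)\equiv\alpha$ expressed in the matrix notation, and then to re‑prove it inside the binary‑tree / membership‑set machinery already developed in Section~2, so that the conclusion about answer sets drops out with no extra work. Throughout, write $m(C)$ for the membership bit string of a clause $C$ (the length‑$2^V$ string read off the tree's leaves).

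First I would set up notation. By hypothesis $C_i$ and $C_j$ agree in every row except one, say row $k$, where $C_i$ carries $T$ (the literal $x_k$) and $C_j$ carries $F$ (the literal $\neg x_k$). Let $\alpha$ denote the disjunction of the literals contributed by the common rows, and let $C$ be the clause obtained from $C_i$ by changing its row‑$k$ entry to $U$; then $C$ also differs from $C_j$ only in row $k$, so $C$ is the ``merged'' clause the lemma refers to. Read as a Boolean formula, a $U$ in row $k$ just means $x_k$ does not occur in that clause, so $C$ is exactly $\alpha$. The statement to prove is then: replacing the two columns $C_i,C_j$ of $R_M$ by the single column $C$ leaves the answer set unchanged.

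The core step is the sublemma $m(C)=m(C_i)\lor m(C_j)$ (bitwise). To see it, look at how each clause sits in the tree of Section~2: at every row $r\neq k$ the three clauses $C_i$, $C_j$, $C$ have identical entries and hence make identical branch choices, whereas at row $k$ clause $C_i$ goes only right ($T$), clause $C_j$ goes only left ($F$), and clause $C$ goes both ways ($U$). Hence the set of root‑to‑leaf paths of $C$ is the \emph{disjoint} union of those of $C_i$ and $C_j$ (disjoint because ``left at level $k$'' and ``right at level $k$'' are incompatible), and at the bit‑string level a disjoint union of path sets is exactly their $\lor$. Substituting into $R_S=C_1\lor\cdots\lor C_C$, the full membership set is unchanged when $\{C_i,C_j\}$ is replaced by $\{C\}$. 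Since $R_A$ is obtained from $R_S$ by the fixed, well‑defined (indeed invertible) operation ``reverse the string, then complement every bit,'' equal $R_S$ forces equal $R_A$, so the answer set is unchanged. As a sanity check on the boundary case: if $\alpha$ is empty then $C$ is the all‑$U$ clause, i.e.\ the empty clause, and the modified problem is unsatisfiable, which matches the fact that $(x_k)\land(\neg x_k)$ is unsatisfiable. One could equally give the purely semantic version of the same fact, $(\alpha\lor x_k)\land(\alpha\lor\neg x_k)=\alpha\lor(x_k\land\neg x_k)=\alpha$, and appeal to the answer set being by definition the set of satisfying assignments.

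Regarding the ``iff,'' I would note that the hypothesis that \emph{all} other rows coincide is genuinely needed: if $C_i$ and $C_j$ differed in a second row as well, the merged clause would be a strict logical consequence of $C_i\land C_j$ rather than equivalent to it. A small witness is $(x_1\lor x_2)\land(\neg x_1\lor\neg x_2)$, which is satisfiable (e.g.\ $x_1=T,x_2=F$), whereas ``cancelling'' both contradictory pairs yields the empty clause and hence an unsatisfiable problem with a different answer set. I do not expect a real obstacle anywhere here; the one place that needs to be written carefully is the path‑set decomposition at level $k$ --- verifying both disjointness and that the union is \emph{precisely} the path set of $C$ --- but this is immediate from the definition of clause paths given earlier.
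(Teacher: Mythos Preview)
Your proposal is correct. The route, however, differs from the paper's. The paper gives a short purely semantic argument: since the two clauses share all literals except the one on $x_k$, any assignment that satisfies both must do so via one of the common literals, so the $x_k$ literal is redundant and can be dropped to $U$; the boundary case (all other rows $U$) is noted as unsatisfiable. This is exactly the resolution identity $(\alpha\lor x_k)\land(\alpha\lor\neg x_k)\equiv\alpha$ that you mention only as an afterthought. Your primary argument instead works inside the membership-set machinery: you prove $m(C)=m(C_i)\lor m(C_j)$ by a path decomposition at level $k$, deduce that $R_S$ and hence $R_A$ are unchanged, and only then cite the semantic identity as an alternative. Your version is more internal to the paper's framework and more carefully stated (in particular, your treatment of the ``iff'' direction, with the explicit witness $(x_1\lor x_2)\land(\neg x_1\lor\neg x_2)$, goes beyond what the paper supplies). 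One caution on placement: this lemma precedes the theorem that identifies $R_A$ with the set of satisfying assignments, so when you write ``equal $R_S$ forces equal $R_A$'' you should make clear you are using only the \emph{definition} of $R_A$ as the reverse-complement of $R_S$, not the later theorem; you do say this, but it is worth flagging to avoid any appearance of circularity.
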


If we have two clauses in a problem $R$ that are almost identical to one another but opposite in only one cell, they can be combined into a $U$. 

\begin{equation*}
	R = 
	\begin{bmatrix}
		\B{T}  & \B{T}\\
	    \B{T}  & \B{T}\\
		\B{T}  & \B{T}\\
		\B{T}  & \B{F}\\
	\end{bmatrix}
	= 
	\begin{bmatrix}
		\B{T}  \\
	    \B{T}  \\
		\B{T}  \\
		U  \\
	\end{bmatrix}
\end{equation*}

The two membership sets are the same:

\begin{equation*}
\begin{split}
C_1 &= 1000 0000 \\
C_2 &= 0100 0000 \\
& \\
C_3 &= 1100 0000
\end{split} 
\end{equation*}

even though the formula representations for the problem are different:

\begin{equation*}
\begin{split}
	R &= (x_1 \lor x_2 \lor x_3 \lor x_4 ) \land (x_1 \lor x_2 \lor x_3 \lor \neg x_4) \\
	&\\
    &= (x_1 \lor x_2 \lor x_3)
\end{split}
\end{equation*}

\begin{proof}
As all other variables of the two clauses are the same, any assignment to $x_4$ above is unnecessary. An assignment to $x_1$, $x_2$, or $x_3$ will satisfy both clauses, we don't need to assign $x_4$ for just one, so it is the same as being unassigned. There is at least one common literal to satisfy both clauses.

This is true even if all other variables \textit{except one} are $U$. If all other variables are $U$ then the problem is unsatisfiable.

So the two different matrix representations of $R$ are equivalent.
\end{proof}

With that we can show that the reversed inverted membership set is the answer set.

\newtheorem{string_inverse}{Theorem}[initial] \label{theorem1}

\begin{string_inverse}
For a problem $R$, we get a membership string $R_S$ from the paths created by each clauses in a matrix $R_M$. 

The reverse inverted string $R_A$ is the set of all valid solutions to $R$.
\end{string_inverse}

\begin{proof}
By Induction

We can define an $f(s)$ that will take a bit string $s$, reverse it's order and flip all of the bits from $1 \rightarrow 0$ and $0 \rightarrow 1$. This will be distributive with respect to the union operator.

\bigskip
\emph{Base Case}: 
Show that for 1 clause then $f(R_{S_1}) = R_{A_1}$.  

Reversing the string is equivalent to flipping any literals in the clause from $T \rightarrow F$ or $F \rightarrow T$. Flipping the bits in the reversed string is equivalent to taking the complement of the set which gives the $2^V$ possible answers.

For a single path it is obvious. For example with $[ T, T, ..., T]$, then the flipped clause $[F, F, ..., F]$ is the only assignment that isn't valid. For $[ F, T, F, ..., T, F ]$ then flipped clause $[ T, F, T, ..., F, T ]$ is the only invalid assignment. 

For $2^q$ paths in one clause with $q$ assignments to $U$, we see that each path has 1 invalid permutation from above. As they are discrete, then there are $2^q$ invalid permutations, as expected, so are $2^V - 2^q$ answers. 

So, $f(R_{S_1}) = R_{A_1}$ as expected.

\bigskip
\emph{Induction hypothesis}: 

Assume that $f(R_{S_k}) = R_{A_k}$ and that it has $2^p$ paths. That $f(s)$ is distributive for the operator $\cup$.

\bigskip
\emph{Induction step}: 

Show that $f(R_{S_{k+1}}) = R_{A_{k+1}}$ iff $f(R_{S_k}) = R_{A_k}$  and $f(R_{S_1}) = R_{A_1}$. 

With the definitions:
\begin{equation*}
\begin{split}
R_{S_{k+1}}&=R_{S_k} \cup R_{S_1} \\
R_{A_{k+1}} &= R_{A_k} \cup R_{A_1}
\end{split}
\end{equation*}

\bigskip
We see that:

\begin{equation*}
\begin{split}
f(R_{S_{k+1}})&=f(R_{S_k} \cup R_{S_1}) \\
&=f(R_{S_k}) \cup f(R_{S_1}) \\
&=R_{A_k} \cup R_{A_1} \\
&=R_{A_{k+1}} 
\end{split}
\end{equation*}

Which is the result we wanted. The paths for $\lvert R_{A_{k+1}} \rvert \leq 2^V - (2^q+2^p)$. So the answers are preserved or reduced as we add more clauses. 
\end{proof}

\section{Solving k-SAT Problems}

The size of the membership set $R_S$ and the size answer set $R_A$ are both $2^V$ which is clearly exponential relative to the input $R_M$. 

If we used the clauses in $R$ to create the full sets for the membership and answers that would be at least $O(2^N)$ to create or manipulate them. 

As any given clause $C_i$ can have bits anywhere in the membership string $R_S$, we have to consider all clauses in $R_M$ to arrive at any valid answer. Any sort of brute force search would require $O(2^N)$.

If we try to work through the sets $R_S$ or $R_A$ they will fragment into an exponential number of pieces because they alternate and overlap. The worse case is $O(2^N)$.

We can \emph{normalize} the clauses so that they are distinct and do not overlap with each other. In general, a normal form of $R_M$ with $q$ answers would have between $V-1$ and $V^2/2$ clauses, depending on $q$.

But as we try to normalize the clauses $C_i$ we run into exponential number of fragments. Depending on the positions of the $U$ assignments, we can't easily resolve overlaps between clauses without first fragmenting them.

As the clauses $C_i$ are intertwined, any sort of algorithm to use them directly to find missing entries in the membership is intertwined and causes exponential work. We can't decompose this into independent sub-problems.

So we can find the solution, but doing so directly it is an exponential effort.  

However, the clauses $C_i$ of $R_M$ are essentially polynomial representations of the underlying exponential membership $R_S$ and answer sets $R_A$. We can leverage this property for faster searching for valid answers.

We just need to do it indirectly. 

\subsection{Clause Overlaps}

The intersection between any two clauses $C_i$ and $C_j$ is a path is either empty $\varnothing$ or it is a single clause $c_k$. 

Each clause represents a bunch of paths through the tree. The paths can only be of sizes that are powers of 2. 

The paths are all connected, it is always one valid subtree of the problem tree. Any such subtree can be represented as a clause $c_k$.

\subsection{Clause Inverse}

For any clause $C_i$ we can take it's inverse $I_i$. The inverse is the set of all paths that are $not$ covered by this clause. We can flip each literal in every clause in $I_i$ to get the full set of answers for $C_i$ as if it were a stand-alone problem.

The inverse $I_i$ of any clause is up to $V$ disjoint clauses. There is one inverse clause for each literal in $C_i$. There are no inverse clauses for variables set to $U$. 

We can easily calculate the inverse $I_i$ by going through each variable in $C_i$, if there is a literal, the inverse is the other non-included subtree. These subtrees are added to the inverse set of clauses.

So if we have a path for $C_i$ with two literals, then the inverse is ($I_{i_1}$, $I_{i_2}$) are the subtrees off of this:

\begin{center}
\begin{tikzpicture}
[
	simple/.style={circle,draw=black,thick},
	subtree/.style={isosceles triangle,draw=black,thick,shape border rotate=90}
]
	
\node[simple] (root) { } ;
\node[simple] (child1) [below=of root] {T} ;
\node[simple] (alt1) [right=of child1] {F} ;
\node[simple] (child2) [below=of child1] {F} ;
\node[simple] (alt2) [left=of child2] {T} ;

\node[subtree] (sub1) [below=of alt1.south] {$I_{i_1}$} ;
\node[subtree] (sub2) [below=of alt2.south] {$I_{i_2}$} ;
\node[subtree] (other) [below=of child2] {...} ;

\draw[-] (root.south) -- (child1.north) ; 
\draw[-] (root.south) -- (alt1.north) ;
\draw[-] (child1.south) -- (child2.north) ;
\draw[-] (child2.south) -- (other.north) ;
\draw[-] (alt1.south) -- (sub1.north) ;
\draw[-] (alt2.south) -- (sub2.north) ;
\draw[-] (child1.south) -- (alt2.north) ;
\end{tikzpicture}
\end{center}

The inverses are disjoint, so normalized by construction. 

\subsection{Inverse Intersections}

For a group of inverses in $R_M$, as they are effectively combined with $AND$, as per Theorem \ref{theorem1}, all we need to do is find the intersection between all of these different possible answer sets.

To be valid, the same answer (variable assignment, path) must appear in all of the inverses $I_i$.

We can optimize finding these intersections because the inverse clauses are disjoint and each clause itself is a polynomial set of possible answers. 

The intersection between any two clauses $C_i$ and $C_j$ is another clause $d_k$, but the intersection between a clause $C_i$ and an inverse $I_i$ is a set of clauses. 

As an inverse $I_i$ is a set of disjoint clauses, comparing that to a clause may produce an overlap for any of the inverse clauses $I_{i_c}$. If there are $k$ clauses in $I_i$ then there can be between $0$ and $k$ intersections.

These $k$ intersection clauses are disjoint because all clauses of $I_i$ are disjoint.

So, for a single clauses there are $k$ intersections, and for all clauses for $I_i$ and $I_j$ there could be $k*l$ intersections. A direct comparison of all intersections with each other would produce a large number of intersections, which would have to be directly compared, etc. so it is exponentially fragmenting if we are not careful.

\subsection{Considerations}

We want to quickly find the intersection between any large set of inverses. 

We can accomplish this by taking all of the clauses for all inverses $I_i$, and treating them as a \emph{candidate} $c_k$ of potential answers to $R_M$. 

We compare each candidate $c_k$ against all of the other inverses $I_j$ to find any overlaps.

When we find an intersection between the candidate and any other clause, we can treat that overlap, which is smaller, as a new candidate and continue from where we are. 

If there are two or more overlaps between the clauses for a single inverse $I_i$, we have to continue the extra clause testing independently, although we can start where the split occurred. 

Since all clauses in an inverse are disjoint, the candidates will be disjoint. We do not want to lose any potentially correct answers as they may be the only answers to the problem.

\newtheorem{lost}{Lemma}[initial]

\begin{lost}
For $R_M$, the intersection of all of its inverses $I_i$ is $R_A$.

\begin{center}
$R_A = I_1 \cap I_2 \cap ... \cap I_C$
\end{center}

Then no valid possible answers are lost from the set of answers $R_A$ by taking the intersection between all of the inverses $I_i$.
\end{lost}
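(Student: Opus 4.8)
The plan is to deduce this lemma from the identity $R_A = f(R_S)$ already proved in Theorem~\ref{theorem1}, by unwinding $f$ into two set operations and then invoking De Morgan's law. First I would fix notation: identify the $2^V$ leaves of the problem tree with the set $\{T,F\}^V$ of variable assignments, and for each clause $C_i$ let $P_i$ be the set of leaves lying on the paths that $C_i$ covers, so that $R_S = \bigcup_{i=1}^{C} P_i$ viewed as a set of leaves. Next I would record what $f$ does on such a set. By the base case of Theorem~\ref{theorem1}, reversing the $2^V$-bit membership string is exactly the involution $\sigma \colon \{T,F\}^V \to \{T,F\}^V$ that flips every literal ($T \leftrightarrow F$) along a path, and flipping all the bits is set complementation inside $\{T,F\}^V$. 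Hence for any set $S$ of leaves,
\begin{equation*}
f(S) \;=\; \{T,F\}^V \setminus \sigma(S) \;=\; \sigma\bigl(\{T,F\}^V \setminus S\bigr),
\end{equation*}
the second equality because $\sigma$ is its own inverse.

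The second step is to identify the inverse $I_i$ in this language. By construction $I_i$ consists of the paths \emph{not} covered by $C_i$, namely $\{T,F\}^V \setminus P_i$, with every literal then flipped; that is, $I_i = \sigma\bigl(\{T,F\}^V \setminus P_i\bigr) = f(P_i)$. Equivalently, by the base case of Theorem~\ref{theorem1}, $I_i = R_{A_i}$ is the answer set of the one-clause problem whose only clause is $C_i$, i.e.\ precisely the set of assignments satisfying $C_i$.

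The third step is the computation. Using $R_S = \bigcup_i P_i$, De Morgan's law, and the fact that the bijection $\sigma$ carries intersections to intersections,
\begin{equation*}
R_A \;=\; f(R_S) \;=\; \sigma\!\left(\{T,F\}^V \setminus \bigcup_{i=1}^{C} P_i\right) \;=\; \sigma\!\left(\bigcap_{i=1}^{C}\bigl(\{T,F\}^V \setminus P_i\bigr)\right) \;=\; \bigcap_{i=1}^{C} \sigma\bigl(\{T,F\}^V \setminus P_i\bigr) \;=\; \bigcap_{i=1}^{C} I_i .
\end{equation*}
Since this is a genuine equality of sets, $\bigcap_i I_i$ contains every valid answer and no others, which is exactly the ``no answers lost'' assertion. (A shortcut avoiding $f$ altogether: an assignment satisfies a CNF formula iff it satisfies every clause, so $R_A = \bigcap_i \{\text{assignments satisfying } C_i\} = \bigcap_i I_i$ directly; the route through Theorem~\ref{theorem1} is spelled out only to keep the argument inside the membership-string framework of the paper.)

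The step needing the most care is the bookkeeping in the first two steps: justifying that string reversal really is the coordinate-flip involution $\sigma$ (so that it commutes with both $\cup$ and $\cap$), and fixing the convention that $I_i$ denotes the literal-flipped inverse rather than the bare path-complement $\{T,F\}^V \setminus P_i$ — with the bare-complement convention the statement would read $R_A = \sigma(I_1)\cap\cdots\cap\sigma(I_C)$ instead, so the lemma implicitly uses the flipped form and I would say so explicitly. By contrast, the disjointness of the clauses inside each inverse, emphasised in the preceding subsection, plays no role here: it matters only for the efficiency of computing the intersection, not for which assignments the intersection contains.
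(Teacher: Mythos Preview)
Your argument is correct and considerably cleaner than the paper's. You reduce the lemma to a one-line De~Morgan computation: $R_A = f(R_S) = \sigma\bigl(\{T,F\}^V\setminus\bigcup_i P_i\bigr) = \bigcap_i \sigma\bigl(\{T,F\}^V\setminus P_i\bigr) = \bigcap_i I_i$, and you also note the even shorter semantic shortcut (an assignment satisfies a CNF iff it satisfies every clause). The paper, by contrast, does not argue set-theoretically at all; its proof is really a correctness sketch for the \emph{candidate-intersection procedure} that follows. It reasons at the level of individual inverse clauses $c_k\in I_i$, uses the disjointness of those clauses, and argues that when a candidate is intersected against some $I_j$ any dropped bits were invalid there, while bits dropped on the $I_j$ side will later reappear as candidates from $I_j$ itself. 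Your route establishes the set identity rigorously and independently of the algorithm; the paper's route is looser but is already phrased in the language of the queue-based search it is about to analyse, so it doubles as motivation for that algorithm.

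Your caveat about the convention for $I_i$ is well placed and worth keeping: the prose in \S3.2 and the \textsc{Inverse} pseudocode produce the \emph{unflipped} path-complement, whereas the lemma as stated needs the literal-flipped version (your $\sigma$ applied) for the equality with $R_A$ to hold on the nose; under the bare-complement reading one gets $\sigma(R_A)=\bigcap_i I_i$ instead. The paper glosses over this, so flagging it explicitly, as you do, is an improvement. Your closing remark that disjointness is irrelevant to the set identity and matters only for efficiency is also a useful clarification that the paper's proof obscures.
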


\begin{proof}
Relative to any one clause $c_k$ in $I_i$ its answer set is its full set of answers. But they are not necessarily an answer to $R_M$ unless they is also included in all other inverses $I_j$.

As the answer sets are disjoint, we can compare any one clause in an inverse $I_i$ against all other clauses in another inverse $I_j$. There may be $0$, $1$, or $p$ intersections. 

\bigskip 

\begin{tikzpicture}
\path (4,2) node(a) [rectangle,draw] {$I_i$ $c_i$: 1111 1111 .... 1111} 
(0,0) node(b) [rectangle,draw] {$I_j$ $c_1$: 1111 0000 ... 0000} 
(4.6,0) node(c) [rectangle,draw] {$I_j$ $c_2$: 0000 1111 ... 0000} 
(10,0) node(d) [rectangle,draw] {$I_j$ $c_k$: 0000 0000 ... 1111} ;
\draw[thick,<<->>] (a.south) -- (b.north) ;
\draw[thick,<<->>] (a.south) -- (c.north) ;
\draw[thick,<<->>] (a.south) -- (d.north) ;
\end{tikzpicture}

\bigskip 

The intersection between a candidate $c_k$ and an inverse $I_j$ is a set of clauses, each of which is a smaller candidate. 

For any bits that were dropped from $c_k$, there is at least one inverse $I_j$ where they are not a valid answer.

For the bits that were dropped on the inverse side $I_j$, eventually they will be candidates and compared back to all other inverses. They will be tested, we will not lose them as possible answers.
\end{proof}

With this, we can find the intersection of these inverses $I_i$ with minimum effort.

We can start by taking all of the clauses $c_k$ in all of the inverses $I_i$ as possible candidates. As they are disjoint with each other in their inverse $I_i$, we can treat them separately as just a set of possible candidates. 

We calculate the intersection of any two clauses. It is empty or a single clause.

We calculate the intersection of a clause $c_k$ and any inverse $I_j$ by going through each of its clauses. The clause $c_k$ may intersect with zero, one or more of the inverse clauses, so we get a set of intersections that are candidates.

We can take any clause in the set of intersections and use it as a potential candidate for 1 or more answers.

As we compare $c_k$ to all other inverses, the size of the answer set may be reduced. It is always reduced exponentially. It will decrease by $1/2$, $1/4$, $1/8$, ... $1/2^q$ of the bits. 

If an inverse clause of any size survives comparison to all other inverses $I_i$, then it is a set of answers to the problem $R$. If there are any remaining $U$ settings, we can pick any literal for each one and then return a valid answer.

Since, an inverse may contain up to $V$ clauses, if there is an intersection and the intersection is smaller than the original clause, we have to consider each V overlap.

Thus if we test $c_1$ against all of the inverses $I_j$, it will get reduced. If it disappears entirely, it is not a valid candidate. If any of the possible answers makes it to the end, then they are valid.

While testing $c_1$, we will also generate fragments, $d_1$, $d_2$, etc. We can ignore these if they are identical to $c_1$. If not, they are exponentially smaller. 

So, although we are splitting off into new candidates with each test, either the candidate remains the same, or it is guaranteed to be exponentially smaller.

\subsection{Inverse Intersection Algorithm}

We can create a simple algorithm $A$ that will accomplish the goals above. To make it easier to see the computational complexity we will \emph{queue} the candidates first, then test them and add new candidates to the queue.

First we calculate all of the inverses and queue each of their clauses.

Second, we take each candidate in place it in a queue, and compare it all of the inverses. 

If we find a single overlap between the candidate and one of the inverse clauses, we use that reduced clause as the new candidate. 

If we find multiple overlaps between different clauses in the inverse, we use the first one as above, and then queue the second one as a new candidate.

\bigskip

\begin{algorithmic}[1]
\Procedure{Intersection}{$r$} \Comment{find intersection for inverses}
	\ForAll{clauses in problem} \Comment{$O(C)$}
		\State $inverse = Inverse(clause)$
		\ForAll{clauses in inverse} \Comment{$O(V)$}
			\State Add to Queue
		\EndFor
	\EndFor
	\While{Queue not empty} \Comment{O(C)*O(V)+O(overlaps)}
    	\State Get next $candidate$
		\For{each inverse} \Comment{O(C)}
			\For{each clause in inverse} \Comment{O(V)}
				\State $overlap = Overlap(candidate, clause)$
				\If {overlap is nil}
					\State $break;break;$ \Comment{Terminate both for loops}
				\Else
					\If {first}
						\State Replace the $candidate$ with $overlap$
					\Else
						\State Add $overlap$ to Queue
					\EndIf
				\EndIf
			\EndFor
			\If {no overlaps in inverse found}
				\State \textbf{return} nil
			\EndIf				
		\EndFor
		\If {candidate not nil}
			\State \textbf{return} $candidate$ \Comment{Paths intersect with all inverses}			
		\EndIf
	\EndWhile
	\State \textbf{return} nil \Comment{No answer, nil for unsat}
\EndProcedure
\end{algorithmic}

\bigskip

We need to find the inverse of any clause. It will be a set of  between $1$ and $V-1$ clauses.

We go through any clause, variable by variable. If we find a literal, then 
we create a new clause with the same literal settings above and all $U$ values below. We flip the current index.

So it is the flipped literal, and a subtree of all $U$ assignments.

\bigskip

\begin{algorithmic}[1]
\Function{Inverse}{$clause$}\Comment{inverse a clause}
	\State set inverse to all U
	\ForAll{(index,variable) in clause}\Comment(O(V))
		\If{variable is T}\Comment{Handle T or F, but ignore U}
			\State inverse[index] = F
			\State Add inverse to Results
			\State inverse[index] = variable \Comment{Reset it for the next possible inverse}
		\ElsIf {variable is F}
			\State inverse[index] = T
			\State Add inverse to Results
			\State inverse[index] = variable \Comment{Reset it for the next possible inverse}
		\EndIf
	\EndFor
	\State return Results \Comment{There is always at least one inverse clause}
\EndFunction
\end{algorithmic}

\bigskip

We need to find the overlap between any two clauses.

We go through both of the clauses at the same time. If the variables are the same, they are included in the new clause. If they contradict each other T != F, then there is no overlap, we can return right away.

If one side is a literal and the other is unassigned, then we pick the literal.

\bigskip

\begin{algorithmic}[1]
\Function{Overlap}{$c1$,$c2$}\Comment{find overlap between two clauses}
	\ForAll{index,variables in clause c1}\Comment{O(C)}
		\If{c1[index] == c2[index]} \Comment{identical T,F, or U, copy it over}
			\State $r[index] = c1[index]$
		\ElsIf{c1 == U and c2 != U} \Comment{right is literal, use that}
			\State $r[index] = c2[index]$	
		\ElsIf{c1 != U and c2 == U} \Comment{left is literal, use that}
			\State $r[index] = c1[index]$	
		\ElsIf{c1 != c2 $and$ both are literals} \Comment{T != F, no overlap}
			\State return nil
		\Else
			\State $r[index] = c1[index]$	\Comment{Copy the left side}
		\EndIf
	\EndFor
	\State return r
\EndFunction
\end{algorithmic}

\bigskip

As well, we need to parse input problems and convert them into matrix format. 
Then we can apply Inverse and see if there is an answer is produced or not. 

\subsection{Computational Complexity}
From the way it is constructed, the complexity of the algorithm $A$ is entirely dependent on the growth of the candidate queue. 

\newtheorem{growth}{Theorem}[initial]

\begin{growth}
The queue and testing for algorithm $A$ grows no faster than $O(K^N)$.

Although it is potentially adding new candidates for each test against each inverse, the size candidates are shrinking exponentially which is faster than the queue is growing.
\end{growth}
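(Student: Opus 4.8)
The plan is to bound the total number of iterations of the \textbf{while} loop in $A$ by a potential (weight) argument on the candidate queue, and then multiply that bound by the polynomial cost of a single iteration. I would first collect the structural facts already available: every candidate and every inverse clause is a \emph{clause}, i.e.\ a subtree of the binary tree on the $V$ variables; a clause with $q$ cells equal to $U$ covers exactly $2^q$ paths; consequently, if one clause's path set is a \emph{proper} subset of another's, the smaller one covers at most half as many paths (path counts are powers of two); the clauses of any inverse $I_i$ are pairwise disjoint; and, since $N = V\cdot C$, we have $V\le N$ and $C\le N$.

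Second, I would analyze one iteration, i.e.\ the processing of a single dequeued candidate $c_0$. Scanning all $C$ inverses, each with at most $V$ clauses, and calling \textsc{Overlap} (cost $O(V)$) on each, the iteration costs $O(CV^2)=O(N^3)$; the one-time construction of all inverses costs $O(CV)=O(N^2)$. The combinatorial core is the claim that the extra candidates $e_1,\dots,e_k$ queued during the iteration, together with the surviving candidate, form a pairwise-disjoint family of sub-path-sets of $c_0$, and that each $e_i$ is a \emph{proper} subset of $c_0$, hence covers at most $|c_0|/2$ paths. To see this I would thread disjointness through the mutation of the candidate: when inverse $I_j$ is tested, the current candidate $c_{j-1}\subseteq c_0$ overlaps a pairwise-disjoint subfamily of the clauses of $I_j$; the first overlap becomes $c_j$ and the rest are queued, so the queued ones are disjoint from $c_j\supseteq c_{j+1}\supseteq\cdots$, and since overlaps produced against later inverses lie inside $c_j$, disjointness across all inverses follows. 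Moreover, an overlap that fills \emph{all} of $c_{j-1}$ must be the \emph{unique} overlap with $I_j$ (the others would be empty, by disjointness), so every queued $e_i$ is strictly smaller than $c_0$. Hence $\sum_i |e_i|\le |c_0|$ with each $|e_i|<|c_0|$.

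Third comes the potential argument. Let $\Phi=\sum_{c\in\text{queue}}\bigl(2\,|c|-1\bigr)$, a nonnegative integer, where $|c|$ is the path count. Initially the queue holds the $\le CV\le N$ inverse clauses, each covering at most $2^{V-1}$ paths, so $\Phi_0\le N\,2^{V}$. In one iteration $c_0$ is removed and $e_1,\dots,e_k$ inserted, so $\Delta\Phi = \bigl(2\sum_i|e_i|-k\bigr)-\bigl(2|c_0|-1\bigr)\le\bigl(2|c_0|-k\bigr)-\bigl(2|c_0|-1\bigr)=1-k$, which is negative for $k\ge 2$; the cases $k=1$ and $k=0$ give $\Delta\Phi<0$ directly (using $|e_1|<|c_0|$ and $|c_0|\ge 1$). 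Thus each iteration strictly decreases the integer $\Phi$, so $A$ executes at most $\Phi_0\le N\,2^{V}$ iterations and the queue never holds more than $O(N\,2^{V})$ candidates, each of size $O(V)$. Multiplying the $O(N\,2^{V})$ iteration count by the $O(N^3)$ per-iteration cost (and adding the $O(N^2)$ setup) bounds both the running time and the queue storage by $O(N^{4}\,2^{V})=O(\mathrm{poly}(N)\cdot 2^{N})$, which is $O(K^{N})$ for any constant $K>2$ (indeed for $K=3$ once $N$ is large), as claimed.

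I expect the main obstacle to be the disjointness claim of the second step: one must verify that the extras queued while testing the successive inverses stay disjoint sub-path-sets of the \emph{original} candidate despite the candidate shrinking between inverses, and that the algorithm never re-queues a child equal to its parent --- the latter is precisely what forces $\Phi$ to strictly decrease and, equivalently, caps the depth of candidate refinement at $V$. Once this is pinned down, the choice of $\Phi$ and the remaining arithmetic are routine.
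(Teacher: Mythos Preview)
Your potential-function argument is correct and is a genuinely different route from the paper's own proof. The paper argues by an informal case analysis on the number $q$ of intersections a candidate has with a single inverse: it treats the cases $q=0,1,V$ separately, and for intermediate $q$ models the queue growth as a $q$-ary tree whose depth is bounded because each split shrinks the candidate's path count by a factor involving $q$, balancing branching against depth to obtain the $O(K^{N})$ bound. Your argument instead assigns the integer weight $\Phi=\sum_{c}(2|c|-1)$ to the queue, verifies that the extras produced in one while-iteration are pairwise-disjoint proper sub-path-sets of the dequeued candidate (the ``threading'' you describe is exactly right, and the fact that any queued extra is accompanied by a non-empty disjoint first overlap is what makes it \emph{proper}), and concludes that $\Phi$ strictly decreases each iteration. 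This buys you a clean, uniform bound of at most $\Phi_{0}\le N\cdot 2^{V}$ iterations without any case split on $q$, and it sidesteps the paper's somewhat loose claim that $q$ intersections each have size at most $1/2^{q}$ of the parent (which is not literally true for uneven splits, as the paper itself notes). Your bound $O(\mathrm{poly}(N)\cdot 2^{V})\subseteq O(K^{N})$ for any fixed $K>2$ matches the theorem as stated and is in fact sharper than the paper's $O(k^{N})$ for $k$-SAT with $k\ge 3$; the paper's approach, on the other hand, tries to make the base of the exponential depend on the clause width $k$, which your potential does not directly capture.
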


\begin{proof}

An inverse can have up to $V$ clauses. A candidate may intersection with all of them. But they are all disjoint, so the intersections themselves will not overlap. 

If a clause has $q$ intersections, then each intersection can must be at least $1/2^q$ in size.

But each intersection can only be split $V-q$ more times at most.

So if a candidate fragments and the reductions will get smaller exponentially.

We can show that the candidates get smaller much faster than the splits grow.

To do this we can look at the number of intersections that may occur for each inverse. 

\bigskip

\emph{Intersection Cases}:

If all candidates have exactly 0 intersections, then we test them all at once, and the queue does not grow. Testing is $V$ so we have $O(N*V) \sim O(N^2)$

If all candidates have exactly 1 intersection, we can substitute that for the candidate itself, so it is identical to the 0 intersection case. 

If all candidates have exactly the maximum V intersections, that is each one has a unique intersection with every clause in the inverse, then the queue will grow by $N*V$. But each intersection will have been split V times, which means that it's size is one. So, we'll get the initial $N*V$ tests, plus a second generation of $(N*V)*V$ tests, which again is $~O(N^2)$

We know in between these edge cases, that the intersections can be exponential making the queue grow like a tree. 

If all candidates have 3 intersections for example, we can substitute 1 away, but 2 more end up in the queue. Which forms a binary tree of $3^{g(x)}$. However, we also know that the intersections are disjoint, so the worst case for overlaps are sizes $(1/2, 1/4, 1/8)$.  If we have a maximum depth of $V$ for the tree, then splitting by $2$, for example, means we go down each time at $i/2$ where $i = V..1$, which means the number of steps until size 1 is $1/3 * V$. Since we are dividing by $4$ for this case, it is $4/3*V$. Then the queue size is $N + N *3^{l4/3*V}$. This is $\sim O(3^N)$ for this case.

We can see that this applies to $q$ intersections as well. We get a tree with $q$ branches. We get a divisor of $1/2^q$ so a depth of $1/q*V$. So we land on $q^1/q*V$.

For any $q$, the overlap sizes may not be evenly split into $q$ pieces. But since they are disjoint, if one piece is larger than $1/2^q$ there must be a corresponding piece that is smaller. The smaller pieces will cancel out the larger ones.
	
We know that the number of overlaps will not be exactly $0, 1, 3, q,$ or $V$ but will vary between these at each level. 

We can see that for a strict k-SAT problem, q = k. If the problem has clauses with varying numbers of literals up to $V$, then it is likely that the worse case is q as the average of all of these.
\end{proof}

The algorithm can solve k-SAT problems, including 3-SAT. It does so in a worst case of $O(k^N)$. The space size of the queue is bounded by $O(k^N)$. Although the algorithm is exponential, the underlying objects it is working with are polynomial and are shrinking faster than the queue is growing. The exponential worst case growth is then cancelled out by the logarithmic reductions.

%\newpage

%%%%%%%%%%%%%%%%%%%%%%%%%%%%%%%%%%%%%%%%%%%%%%%%%%%

\end{document}